\DeclareMathAlphabet{\mathantt}{OT1}{antt}{li}{it}
\DeclareMathAlphabet{\mathpzc}{OT1}{pzc}{m}{it}
\DeclarePairedDelimiter\norm{\lVert}{\rVert}%
\newtheorem{theorem}{Theorem}
\newtheorem{lemma}{Lemma}
\DeclareFontFamily{OT1}{pzc}{}
\DeclareFontShape{OT1}{pzc}{m}{it}%
  {<-> s * [1.1] pzcmi7t}{}
\DeclareMathAlphabet{\mathpzc}{OT1}{pzc}%
                     {m}{it}
\def\I{\mathcal{I}}
\def\J{\mathcal{J}}
\def\u{\mathpzc{u}}
\def\U{\mathcal{U}}
\def\P{\mathcal{P}}
\DeclareMathOperator{\argmin}{\arg\min}
\renewcommand{\vec}[1]{\mathbold{#1}}
\renewcommand{\bm}[1]{\mathbold{#1}}
\title{A Framework for Optimizing Multi-cell NOMA: \\ Delivering Demand with Less Resource}
\author[1]{Lei You}
\author[2]{Lei Lei}
\author[1]{Di Yuan}
\author[3]{Sumei Sun}
\author[2]{Symeon Chatzinotas}
\author[2]{Bj\"{o}rn Ottersten}
\affil[1]{{\small Department of Information Technology, Uppsala University,
Sweden}}
\affil[2]{{\small Interdisciplinary Centre for Security, Reliability and Trust, Luxembourg University, Luxembourg}}
\affil[3]{{\small Institute for Infocomm Research, A*STAR, Singapore}}
\affil[ ]{\em{ Emails: \{lei.you; di.yuan\}@it.uu.se, \{lei.lei; symeon.chatzinotas; bjorn.ottersten\}@uni.lu, sunsm@i2r.a-star.edu.sg}\/
\vspace{-0.0cm}
}
\begin{document}

\maketitle

\begin{abstract}
Non-orthogonal multiple access (NOMA) allows multiple users to simultaneously access the same time-frequency resource by using superposition coding and successive interference cancellation (SIC). Thus far, most papers on NOMA have focused on performance gain for one or sometimes two base stations. In this paper, we study multi-cell NOMA and provide a general framework for user clustering and power allocation, taking into account inter-cell interference, for optimizing resource allocation of NOMA in multi-cell networks of arbitrary topology. We provide a series of theoretical analysis, to algorithmically enable optimization approaches. The resulting algorithmic notion is very general. Namely, we prove that for any performance metric that monotonically increases in the cells' resource consumption, we have convergence guarantee for global optimum. We apply the framework with its algorithmic concept to a multi-cell scenario to demonstrate the gain of NOMA in achieving significantly higher efficiency.
\end{abstract}

\section{Introduction}

To what extent can non-orthogonal multiple access (NOMA) improve network resource efficiency? In two recent surveys~\cite{7676258,2016arXiv161101607S}, the authors pointed out that resource allocation in multi-cell NOMA poses more research challenges compared to the single-cell case, because optimizing NOMA with multiple cells has to model the interplay between successive interference cancellation (SIC) and inter-cell interference. As one step forward, the investigations in ~\cite{7676258,2016arXiv161101607S} have addressed two-cell scenarios. To the best of our knowledge, enhancing network resource efficiency in multi-cell NOMA with user pairing has not been addressed yet. In~\cite{7582424}, the authors proposed two interference alignment based coordinated beamforming methods in two-cell scenarios. Reference~\cite{DBLP:journals/corr/TabassumHH16} uses stochastic geometry to model the inter-cell interference in NOMA. The crucial aspect of multi-cell NOMA consists of capturing mutual influence among cells. 
In the past decade, a modeling approach that characterizes the inter-cell interference via capturing the mutual influence among the load of cells, i.e., \textit{load coupling}, had been proposed and widely adopted for orthogonal multiple access (OMA) networks~\cite{7132788,5198628,6204009,6732895,7585124,7880696}.
However, this approach does not apply to NOMA, because time-frequency resource sharing via SIC is not allowed. Whether or not the type of model in OMA can be extended to NOMA has remained open until now. 

The main contribution of this paper is that, \textit{we provide a general framework for obtaining optimal user clustering and power allocation in interference-coupled multi-cell NOMA for resource efficiency.} More specifically,
\begin{enumerate}
	\item We make a significant generalization for the interference models being used in \cite{7132788,5198628,6204009,6732895,7585124,7880696} to multi-cell NOMA. Theoretical analysis in terms of feasibility and computation are provided.
	\item Based on 1), a unified optimization framework for jointly optimizing user clustering and power allocation in multi-cell NOMA is derived, to achieve global optimum for \textit{any} performance metric that monotonically increases in the cells' resource consumption.
	\item We demonstrate the gain of NOMA in multi-cell scenarios and show NOMA is indeed a promising solution for meeting user demands with less resource than OMA.
\end{enumerate}

\section{A Multi-cell Interference Averaging Model}
\label{sec:sys_model}

\subsection{Network Model}
We consider downlink, and remark that the framework can be straightforwardly extended to uplink. Denote by $\I=\{1,2,\ldots,n\}$ and $\J=\{n+1,n+2,\ldots,n+m\}$ the sets of cells and UEs, respectively. Denote by $\J_i$ the set of UEs served by cell $i$, with $i\in\I$. Denote by $g_{ij}$ the path loss factor from cell $i$ to UE $j$, with $i\in\I$ and $j\in\J$. 
When using $j$ to refer to one UE in $\J$, $i$ by default indicates $j$'s serving cell, unless stated otherwise. 

We use $\mathpzc{u}$ as a generic notation for UE \textit{cluster} (referred to as ``cluster'' in the remaining context), i.e. a set consisting of one or multiple users that are allowed to access the same time-frequency resource by SIC. {With more users being put in a cluster, the complexity for decoding in SIC grows fast~\cite{7676258,2016arXiv161101607S}. For the sake of this practical consideration, we follow the assumption used in other references~\cite{7676258,2016arXiv161101607S,7273963,R1-153332,7582424,7357604} that up to two users are clustered together\footnote{Reference \cite{7273963} demonstrated most of the possible performance improvement can be achieved by two-users clustering in NOMA.}. If there is a need to differentiate between clusters, we put indices on $\u$, e.g., $\u_1,\u_2,\u_3,\ldots$. For UE clustering in cell $i$, denote by $\U_i$ the set of candidate clusters. We have $\U_i\cap\U_k=\phi$ for any $i\neq k$ with $i,k\in\I$. Similarly, denote by $\U_j$ ($j\in\J$) the set of clusters containing UE $j$. Let $\U=\bigcup_{i\in\I}\U_{i}$ (or equivalently $\U=\bigcup_{j\in\J}\U_{j}$) be the set of all clusters. Note that one UE may belong to multiple user clusters, e.g. $\u_1=\{1,2\}$, $\u_2=\{1,3\}$ with UE $1$ belonging to both $\u_1$ and $\u_2$. To keep the generality of our model for extreme case (e.g. there is only one UE in a cell), a cluster may consist of a single UE, i.e. $\u_1=\{1\}$ and $\u_2=\{2\}$.

The time-frequency domain resource that is divided into resource blocks (RBs). Let $p_i$ be the transmission power on any RB in cell $i$. For any cluster $\u=\{j,h\}$ in cell $i$, RB(s) can be accessed together (i.e., shared) by UEs $j$ and $h$. On any of the shared RBs, \textit{power splitting} is done on $p_i$, with $p_{j\u}$ and $p_{h\u}$ allocated to $j$ and $h$, respectively, and $p_{j\u}+p_{h\u}=p_i$. On one RB, for any UE $j$ and any cluster $\u$ ($j\in\u$), the signal-to-interference and noise ratio (SINR) is computed by:
\begin{equation}
\gamma_{j\u} = \frac{p_{j\u}g_{ij}}
{\underbrace{\sum_{\substack{h\in\u:\\ b_{\u}(h)<b_{\u}(j)}}
p_{h\u}g_{ij}}_{\textnormal{intra-cell}} 
+\underbrace{\sum_{k\in\I\backslash\{i\}}
   I_{kj}
}_{\textnormal{inter-cell}}
      +\sigma^2
}
\label{eq:sinr}	
\end{equation}

In~\eqref{eq:sinr}, $\sigma^2$ is the noise power. Parameter $I_{kj}$ refers to the interference from cell $k$ to UE $j$. 
In~\cite{tse2005fundamentals} (Chapter 6.2.2, pp. 238) it is shown that, with superposition coding, a user can decode the data of another user with worse channel gain. The user with worse channel condition is subject to intra-cell interference. 
We use bijection $b_{\u}(j)\rightarrow\{1,2\}$ ($\u\in\U$ and $j\in\u$) to represent the decoding order. Based on the bijection, the UE with value $1$ decodes the UE with value $2$, and the UE with value $2$ receives intra-cell interference from the UE with value $1$. The decoding order is not constrained by power splitting~\cite{tse2005fundamentals}, even though by our numerical results, more power is always allocated to the one with worse channel. The decoding order is fixed. The issue of the influence of inter-cell interference on the decoding order is addressed later in Section~\ref{subsec:decoding}.

\subsection{Multi-cell Interference Modeling}
\label{subsec:averaging}

Interference modeling based on considering the amount of resource consumption has been widely used for OMA. The method is specified as follows. Denote by $\rho_{k}$ the proportion of RBs that are allocated for serving UEs in cell $k$. If cell $k$ is fully loaded, meaning that all RBs are allocated, then $\rho_{k}=1$. Another extreme case is that cell $k$ is idle within the time interval in question, and accordingly $\rho_{k}=0$. For the two cases, consider any UE $j$ served by cell $i$. The exact interference $j$ receives from cell $k$ is $I_{kj}=p_{k}g_{kj}$ and $I_{kj}=0$, respectively. For the former case, cell $k$ interferes with every RB in cell $i$. For the latter, no interference is caused by cell $k$, as none of the RBs in cell $k$ are active when $\rho_k=0$. For $0<\rho_{k}<1$, a balance is stroked between exactness and simplicity by averaging on the interference within the time-frequency domain, see~\eqref{eq:Ikj}. This interference averaging technique was used in \cite{7132788,5198628,6204009,6732895,7585124,7880696}.
\begin{equation}
I_{kj} = p_{k}g_{kj}\rho_{k}	
\label{eq:Ikj}
\end{equation}

Intuitively, $\rho_{k}$ reflects the likelihood that a UE outside cell $k$ receives interference from $k$. By the definition of $\rho_k$, it can be interpreted as the \textit{load of cell} $k$, and \textit{used for measuring the time-frequency resource consumption} of cell $k$. An explanation of \eqref{eq:Ikj} is that, the inter-cell interference incurred by a cell is directly proportional to the cell's load, which has a direct correlation to the number of served UEs and the intensity of the cell's data traffic.

\subsection{Decoding}
\label{subsec:decoding}
The modeling complexity increases significantly for NOMA because inter-cell interference influences the decoding order.
The modeling task is approached by identifying those clusters of which the decoding orders are decoupled from the inter-cell interference. UEs fulfilling Lemma~\ref{rmk:decoding} below are theoretically guaranteed to be independent of the inter-cell interference in respect of decoding in SIC. The proof of Lemma~\ref{rmk:decoding} is in the Appendix. Clusters consisting of UEs violating Lemma~\ref{rmk:decoding} are excluded from $\U$ and the model complexity is thus significantly reduced.
\begin{lemma}
Suppose two users $j$ and $h$ within cluster $\u$ are served by cell $i$ ($g_{ij}>g_{ih}$).
If $g_{ij}/g_{ih}\geq g_{kj}/g_{kh}$ for all $k\in\I\backslash\{i\}$, then $b_{\u}(j)=1$ and $b_{\u}(h)=2$. 
\label{rmk:decoding}
\end{lemma}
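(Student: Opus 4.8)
The plan is to reduce the claim about the SIC decoding order to a single inequality comparing the two users' effective channel qualities, and then to show that the hypothesis $g_{ij}/g_{ih}\geq g_{kj}/g_{kh}$ forces that inequality to hold for \emph{every} admissible load profile, i.e.\ independently of the inter-cell interference.

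First I would invoke the standard superposition-coding principle for the (degraded) downlink already cited via \cite{tse2005fundamentals}: among two users sharing an RB, the one with the larger received-signal-to-(interference-plus-noise) ratio is the user that can decode and cancel the other's signal, and is therefore the one assigned value $1$. Writing $N_j=\sum_{k\in\I\backslash\{i\}} I_{kj}+\sigma^2$ and $N_h=\sum_{k\in\I\backslash\{i\}} I_{kh}+\sigma^2$ for the interference-plus-noise seen by $j$ and $h$, the assignment $b_{\u}(j)=1,\ b_{\u}(h)=2$ is exactly the statement $g_{ij}/N_j\geq g_{ih}/N_h$ (equivalently, the condition that $j$'s view of $h$'s signal is no worse than $h$'s own, so SIC at $j$ succeeds). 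Hence it suffices to establish this inequality.

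Next I would clear denominators: $g_{ij}/N_j\geq g_{ih}/N_h$ is equivalent to $g_{ij}N_h\geq g_{ih}N_j$, which, after substituting the interference-averaging model $I_{kj}=p_k g_{kj}\rho_k$ and $I_{kh}=p_k g_{kh}\rho_k$ from \eqref{eq:Ikj}, becomes
\begin{equation*}
g_{ij}\Big(\sum_{k\neq i} p_k g_{kh}\rho_k+\sigma^2\Big)\ \geq\ g_{ih}\Big(\sum_{k\neq i} p_k g_{kj}\rho_k+\sigma^2\Big),
\end{equation*}
the sums being over $k\in\I\backslash\{i\}$. I would then split this into a noise part and an interference part. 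The noise part reduces to $g_{ij}\sigma^2\geq g_{ih}\sigma^2$, which is immediate from $g_{ij}>g_{ih}$. The interference part rearranges to $\sum_{k\neq i} p_k\rho_k\,(g_{ij}g_{kh}-g_{ih}g_{kj})\geq 0$; here the hypothesis $g_{ij}/g_{ih}\geq g_{kj}/g_{kh}$ (all gains positive) gives $g_{ij}g_{kh}-g_{ih}g_{kj}\geq 0$ term by term, and since $p_k\geq 0$ and $\rho_k\geq 0$ each summand is nonnegative. Adding the two parts yields the desired inequality, hence the claimed order.

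The point I would emphasize is that this argument never uses the actual values of $\rho_k$: the hypothesis is precisely the condition guaranteeing that $j$ retains the larger effective channel for \emph{all} load profiles $\rho_k\in[0,1]$, so the decoding order is decoupled from inter-cell interference — exactly the property the model requires. I do not expect a serious obstacle in the algebra; the only step needing care is pinning down the correct decoding-order criterion (the effective-channel comparison $g_{ij}/N_j$ versus $g_{ih}/N_h$), since once that reduction is in place the inequality follows in one line from the hypothesis and the nonnegativity of $p_k$ and $\rho_k$.
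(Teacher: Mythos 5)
Your proposal is correct and follows essentially the same route as the paper: the paper's decoding criterion $\gamma_{hj}\geq\gamma_{hh}$ (the SINR of $h$'s signal as seen at $j$ versus at $h$ itself) cross-multiplies to exactly your effective-channel comparison $g_{ij}/N_j\geq g_{ih}/N_h$, and both arguments then reduce to the same term-by-term sign check $\sum_{k\neq i}p_k\rho_k\left(g_{ij}g_{kh}-g_{ih}g_{kj}\right)\geq 0$ from the hypothesis plus $g_{ij}>g_{ih}$ for the noise term. The only cosmetic difference is that you split the inequality into noise and interference parts while the paper collects everything on one side.
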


In practical consideration, Lemma~\ref{rmk:decoding} reduces the user clustering complexity without damaging the performance. As pointed out by~\cite{7307220,6954404}, the large scale path-loss is a practically reasonable factor for ranking the decoding order. As for user clustering in NOMA,
two users with disparate channels from the cell are preferred
to be clustered for achieving good performance~\cite{7273963,7357604}.
Consider a cluster $\u=\{j,h\}$ of cell $i$. If $g_{ij}\gg g_{ih}$, then most likely $g_{ij}/g_{ih}>g_{kj}/g_{kh}$ for $k\in\I\backslash\{i\}$, as the large scale path loss from other cells, tends not to differ as much as from the serving cell $i$ in this case.

\subsection{Cell Load Computation with User Clustering}

Denote by $d_j$ the bit demand of UE $j$ with $j\in\J$. Let $B$ be the spectral bandwidth on each RB. Denote by $M$ the total number of RBs. Since the term $B\log(1+\gamma_{j\u})$ represents the capacity of one RB for UE $j$ in cluster $\u$ ($j\in\u$), the total achievable capacity on all RBs with respect to $j$ and $\u$ is computed by 
\begin{equation}
    c_{j\u}= MB
    \log\left(1+\gamma_{j\u}
\right).
\label{eq:cju}
\end{equation}
Denote by $x_{\u}$ the proportion of allocated RB(s) to cluster $\u$. The sum of $x_{\u}$ for $\u\in\U_i$ equals the load of cell $i$, as shown by~\eqref{eq:rhoi}, where $\bar{\rho}$ represents the load limit.
\begin{equation}
    \rho_{i} = \sum_{\u\in\U_i}x_{\u}\leq\bar{\rho}
\label{eq:rhoi}
\end{equation}
Note that the term $c_{j\u}x_{\u}$ computes the achieved bits for UE $j$ in cluster $\u$ with allocated proportion of time-frequency resource $x_{\u}$. To satisfy the quality-of-service (QoS) requirement, we have for $j\in\J$:
\begin{equation}
\sum_{\u\in\U_j}c_{j\u}x_{\u}\geq d_{j}.
\label{eq:dj}
\end{equation}

Given $\vec{d}=[d_1,d_2,\ldots,d_m]$, the inequalities system \eqref{eq:sinr}--\eqref{eq:dj} forms a region for $\vec{x}=[x_{1},x_{2},\ldots,x_{|\U|}]$. Within this region, the QoS can be satisfied with the available network resource.  Note that the system is non-linear, as $\rho_k$ appears in the logarithm term in~\eqref{eq:cju}. The user clustering problem is to select a subset of clusters in $\U$ and respectively allocate resource to each selected cluster. For a cluster $\u$ that is not selected, then $x_{\u}=0$. Note that allocating more resource to one cell's cluster may enhance the QoS of the cell, while causing more interference to others. Besides, selecting sub-optimal clusters may lead to over load of cells or failure of meeting the bit demand. Hence the problem is challenging.

\subsection{Comparison to OMA Modeling}
We remark that the models proposed for OMA in~\cite{6732895} are essentially a special case of the NOMA model in this section, i.e. $\U=\{\{1\},\{2\},\ldots,\{m\}\}$. In this case, the intra-cell interference term disappears from~\eqref{eq:sinr}. Since any cluster $\u$ ($\u\in\U$) only contains one UE $j$ ($j\in\J$), the indices ``$j\u$'' (and the index ``$\u$'') can be merged (replaced) to (by) $j$, for \eqref{eq:sinr} and \eqref{eq:cju}-\eqref{eq:dj}. Parameter $x_j$ and $c_j$ then represent respectively the proportion of allocated RB(s) and the achievable capacity for UE $j$ with $j\in\J$. With all these being done, \eqref{eq:cju}--\eqref{eq:dj} can be combined such that $\vec{x}$ is eliminated, leaving $\bm{\rho}$ to be the only variable. This system of cell load $\bm{\rho}$ fulfills the analytical framework of standard interference function (SIF), which enables the computation of the optimal network load settings via fixed-point iterations~\cite{6204009}.

Indeed, by viewing the model as a feasibility problem with variables $\vec{x}$ and $\bm{\rho}$, the orthogonality in OMA enables decomposition among UEs, in terms of the QoS constraints \eqref{eq:dj}. 
The resource allocation is thus on UE-level.
However, in the general NOMA case, one needs to optimize the split of UE demand across multiple clusters containing the same UE. As a result, the clusters sharing UEs couple with each other. The loss of orthogonality therefore leads to a new dimension of complexity in the analysis.

\section{Analytical Results}
\label{sec:opt}

\subsection{Main Results}
In this section, we provide theoretical insights for the proposed model in Section~\ref{sec:sys_model}. The main results are summarized as follows. The model in Section~\ref{sec:sys_model} falls into the framework of SIF with respect to the cell load $\bm{\rho}$. The proof of this conclusion directly leads to a framework for user clustering and power allocation. Algorithms within this framework are able to solve the problem named~\textit{MinF} in~\eqref{eq:minF} ($i\in\I$, $j\in\J$, $\u\in\U$) to global optimum, with any real-valued function $F(\bm{\rho})$ that is monotonically increasing in $\bm{\rho}$.
\begin{equation}
[\textit{MinF}]\quad \min_{\bm{\rho},\vec{p},\vec{x}\geq\vec{0}} F(\bm{\rho})\quad \textnormal{s.t.}~\eqref{eq:sinr}\textnormal{--}\eqref{eq:dj},~\vec{p}\in\P
\label{eq:minF}
\end{equation}

In~\eqref{eq:minF}, $\P$ is a (finite) set of candidate power allocations for a user cluster. In the main analysis, we temporarily fix the power to one of the candidates in $\P$ until Section~\ref{subsec:power}. Note that this simplification is made only for the sake of presentation, without any loss of the generality of our conclusions. In Section~\ref{subsec:power}, we relax this assumption and extend our analytical results to the case with the freedom of power allocation.

\subsection{Single Cell Load Minimization}
\label{subsec:load_opt}
We start with the much simpler problem that concerns a single cell.
Consider any cell $i$ ($i\in\I$). The load minimization problem for cell $i$ is in~\eqref{eq:minload}, with indices $\u\in\U_i$ and $j\in\u$ in \eqref{eq:sinr}\textnormal{--}\eqref{eq:cju} and \eqref{eq:dj}. Variable $\vec{x}_i$ represents the vector of $x_{\u}$ $(\u\in\U_i)$.
\begin{equation}
\min_{\rho_i,\vec{x}_i\geq\vec{0}} \{\rho_i=\sum_{\u\in\U_i}x_{\u}|~\eqref{eq:sinr}\textnormal{--}\eqref{eq:cju},\eqref{eq:dj}\}
\label{eq:minload}	
\end{equation}

Note that in~\eqref{eq:minload}, the loads of all cells other than $i$, i.e., $\rho_{k}$ ($k\in\I\backslash\{i\}$), are treated as parameters instead of variables. With this precondition, the single-cell load minimization is a linear programming (LP) problem and can thus be solved to optimum efficiently. We remark that for any given load of cells $k\in\I\backslash\{i\}$, there is a minimum $\rho_i$. Thus, one can view the minimum $\rho_i$ as a function of $\rho_k$, $k\in\I\backslash\{i\}$. For convenience, we use $\bm{\rho}_{-i}$ to represent the vector of all elements in $\bm{\rho}$ other than $\rho_{i}$. We show in Lemma~\ref{rmk:feasible} below the feasibility of \eqref{eq:minload} for the sake of rigor.

\begin{lemma}
The system of inequalities of \eqref{eq:sinr}\textnormal{--}\eqref{eq:cju},\eqref{eq:dj} is always feasible 
for variables $\rho_i$ and $\vec{x}_i$.
\label{rmk:feasible}
\end{lemma}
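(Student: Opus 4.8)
The plan is to exploit the fact that, in~\eqref{eq:minload}, the loads $\rho_k$ of all other cells ($k\in\I\backslash\{i\}$) are treated as fixed parameters and the power is fixed to a candidate in $\P$. Under these conditions the only genuine variables are $\vec{x}_i$, while the constraints~\eqref{eq:sinr}--\eqref{eq:cju} merely \emph{evaluate} the constants $\gamma_{j\u}$ and $c_{j\u}$; the only substantive requirements are the demand inequalities~\eqref{eq:dj} together with $\vec{x}_i\geq\vec{0}$. The strategy is therefore to (i) show that every $c_{j\u}$ is a strictly positive (and finite) constant, and then (ii) construct an explicit nonnegative $\vec{x}_i$ meeting all demand constraints.

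For step (i), I would first bound the denominator of~\eqref{eq:sinr}. Each inter-cell term $I_{kj}=p_{k}g_{kj}\rho_{k}$ from~\eqref{eq:Ikj} is finite because the parameter $\rho_k$ is finite (indeed $\rho_k\in[0,\bar{\rho}]$), and the intra-cell term is likewise finite; adding $\sigma^2>0$ makes the denominator strictly positive. Since the numerator $p_{j\u}g_{ij}$ is strictly positive for a positive power split and positive path-loss factor, $\gamma_{j\u}$ is a well-defined strictly positive number, whence $c_{j\u}=MB\log(1+\gamma_{j\u})>0$ by~\eqref{eq:cju}. The role of $\sigma^2>0$ is precisely to keep the denominator bounded away from $0$ so that $\gamma_{j\u}$ is well defined; together with finiteness of the interference terms this yields the positivity-and-finiteness of every $c_{j\u}$, which is the one place where the argument could otherwise break down.

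For step (ii), I would observe that each UE $j$ served by cell $i$ belongs to at least one candidate cluster, i.e. $\U_j\cap\U_i\neq\phi$ (at worst the singleton $\{j\}$), so the left-hand side of~\eqref{eq:dj} contains at least one term with a strictly positive coefficient $c_{j\u}$. Because \emph{all} coefficients in~\eqref{eq:dj} are nonnegative, increasing any $x_{\u}$ can only raise the left-hand sides and never creates a conflict between two UEs that happen to share a cluster. Hence a feasible point can be built directly: for every UE $j$ pick one cluster $\u_j\in\U_j\cap\U_i$ and set $x_{\u_j}$ large enough that $c_{j\u_j}x_{\u_j}\geq d_j$, leaving the remaining $x_{\u}=0$; equivalently one may take $x_{\u}=\max_{j\,:\,\u\in\U_j}\,d_j/c_{j\u}$. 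This $\vec{x}_i\geq\vec{0}$ satisfies~\eqref{eq:dj} for all $j$, and the induced $\rho_i=\sum_{\u\in\U_i}x_{\u}$ is finite, establishing feasibility.

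The main obstacle is the strict positivity and finiteness of the coefficients $c_{j\u}$ in step (i); once these are secured, the demand system is just a set of linear inequalities with positive coefficients in nonnegative variables, whose feasibility follows immediately from the monotonicity noted in step (ii). The only modelling assumption that must be made explicit is that every served UE lies in at least one candidate cluster, without which~\eqref{eq:dj} for that UE could not be satisfied.
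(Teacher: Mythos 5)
Your argument is correct, but it takes a different route from the one the paper claims to use: the paper omits the proof entirely, stating only that it is ``based on Farkas' lemma,'' i.e.\ a theorem-of-the-alternative argument showing that no certificate of infeasibility exists for the linear system. You instead give a direct constructive proof, and the two key observations you make are exactly the right ones: (i) with $\bm{\rho}_{-i}$ and the power allocation fixed, constraints \eqref{eq:sinr}--\eqref{eq:cju} only evaluate the constants $\gamma_{j\u}$ and $c_{j\u}$, and since $\sigma^2>0$ keeps the SINR denominator strictly positive while the interference terms $I_{kj}=p_kg_{kj}\rho_k$ are finite, each $c_{j\u}$ is a finite, strictly positive coefficient (given a positive power split and $g_{ij}>0$); and (ii) the statement deliberately excludes the cap $\rho_i\le\bar{\rho}$ of \eqref{eq:rhoi}, so the remaining demand constraints \eqref{eq:dj} form a monotone system of linear inequalities with nonnegative coefficients in nonnegative variables, which your explicit point $x_{\u}=\max_{j:\,\u\in\U_j}d_j/c_{j\u}$ satisfies. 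Your approach is more elementary and yields an explicit feasible point (hence also the finiteness of the LP optimum used in Lemma~3), whereas the Farkas-based route would generalize more readily if one wanted to characterize \emph{when} feasibility fails under additional coupling constraints. The two caveats you flag --- that every served UE must belong to at least one candidate cluster in which it receives positive power (guaranteed here since singleton clusters are admitted and receive the full $p_i$) --- are genuine modelling assumptions that any proof of this lemma needs, so making them explicit is a strength rather than a gap.
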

The proof of Lemma~\ref{rmk:feasible} is based on Farkas' lemma~\cite{bertsimas1997introduction}. The proof is not shown here due to the limit of space. The problem in~\eqref{eq:minload} can then be defined as a function of $\bm{\rho}_{-i}$, which gives the minimum load for cell $i$, as shown in~\eqref{eq:fi}:
\begin{equation}
f_{i}(\bm{\rho}_{-i}) = \min_{\rho_i,\vec{x}\geq\vec{0}} \{\rho_i=\sum_{\u\in\U_i}x_{\u}|~\eqref{eq:sinr}\textnormal{--}\eqref{eq:cju},\eqref{eq:dj}\}.
\label{eq:fi}
\end{equation}
\begin{lemma}
No infinite discontinuity exists for $f_i(\bm{\rho}_{-i})$.
\label{rmk:domain}
\end{lemma}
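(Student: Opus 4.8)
The plan is to show that $f_i$ is \emph{locally bounded} at every finite point of its domain; since an infinite discontinuity would require $f_i\to\pm\infty$ as $\bm{\rho}_{-i}$ approaches some finite argument, local boundedness rules it out immediately. The starting observation is that, once $\bm{\rho}_{-i}$ is fixed, the dependence of the linear program in~\eqref{eq:fi} on $\bm{\rho}_{-i}$ enters only through the capacities $c_{j\u}$. Indeed, in~\eqref{eq:sinr} the intra-cell term and the noise $\sigma^2$ are constants (the power is fixed to a candidate of $\P$ throughout this section), so the whole dependence sits in the inter-cell term $\sum_{k\in\I\backslash\{i\}}I_{kj}=\sum_{k\in\I\backslash\{i\}}p_kg_{kj}\rho_k$, which by~\eqref{eq:Ikj} is nonnegative and linear in $\bm{\rho}_{-i}$.

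First I would argue that the capacities never vanish at a finite point. For any finite $\bm{\rho}_{-i}$ the inter-cell sum is finite, so the denominator of $\gamma_{j\u}$ in~\eqref{eq:sinr} is finite and bounded below by $\sigma^2>0$, while its numerator $p_{j\u}g_{ij}$ is a fixed positive constant. Hence $\gamma_{j\u}\in(0,\infty)$ and, by~\eqref{eq:cju}, $c_{j\u}=MB\log(1+\gamma_{j\u})>0$. Moreover $c_{j\u}$ is a composition of the linear interference map, the rational SINR expression, and $\log(1+\cdot)$, so it is continuous in $\bm{\rho}_{-i}$.

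The key step is a finite, neighborhood-uniform upper bound on $f_i$. By Lemma~\ref{rmk:feasible} the program is feasible for every $\bm{\rho}_{-i}$, and its objective is bounded below by $0$, so it suffices to bound it from above. For each $j\in\J_i$ pick a cluster $\u_j^*\in\argmax_{\u\in\U_j}c_{j\u}$ and set $x_\u=\sum_{j:\,\u_j^*=\u}d_j/c_{j\u}$. This allocation satisfies every constraint~\eqref{eq:dj}, since $\sum_{\u\in\U_j}c_{j\u}x_\u\geq c_{j\u_j^*}x_{\u_j^*}\geq d_j$, whence
\[
f_i(\bm{\rho}_{-i})\leq\sum_{j\in\J_i}\frac{d_j}{\max_{\u\in\U_j}c_{j\u}}.
\]
Because each $c_{j\u}$ is continuous and strictly positive, the right-hand side is a continuous, finite function of $\bm{\rho}_{-i}$, and over any compact neighborhood of a finite point $\bm{\rho}_{-i}^0$ it attains a finite maximum $C$. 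Thus $0\leq f_i(\bm{\rho}_{-i})\leq C$ on that neighborhood, so $f_i$ is locally bounded and has no infinite discontinuity.

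The main obstacle is exactly this uniform upper bound: everything reduces to keeping the capacities bounded away from zero near a finite load vector. The only mechanism that could drive $f_i\to\infty$ is a vanishing capacity $c_{j\u}\to0$, which in turn demands that the inter-cell interference $\sum_{k\in\I\backslash\{i\}}p_kg_{kj}\rho_k$ blow up, i.e. some $\rho_k\to\infty$. Such blow-up can occur only ``at infinity'' and never at a finite $\bm{\rho}_{-i}$, which is precisely why no infinite discontinuity exists on the domain; the remaining care lies only in upgrading pointwise positivity of the capacities to a bound uniform over a neighborhood, which the continuity-and-compactness argument above supplies.
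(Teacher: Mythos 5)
Your proof is correct, but it takes a genuinely different and more constructive route than the paper's. The paper disposes of the lemma in one line: by Lemma~\ref{rmk:feasible} the LP in \eqref{eq:minload} is feasible for every $\bm{\rho}_{-i}$, and since $\rho_i\geq 0$ the objective is bounded below, so the LP attains a finite optimum and $f_i$ is real-valued everywhere --- that is all the authors mean by ``no infinite discontinuity.'' You instead exhibit an explicit feasible point (serve each UE entirely through its best cluster), which yields the closed-form bound $f_i(\bm{\rho}_{-i})\leq\sum_{j\in\J_i}d_j/\max_{\u\in\U_j}c_{j\u}$, and then use positivity and continuity of the capacities (the denominator of $\gamma_{j\u}$ is at least $\sigma^2>0$ at any finite load, so $c_{j\u}>0$ and varies continuously) to make this bound uniform over a compact neighborhood. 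This buys you two things the paper's argument does not state: a constructive substitute for the omitted Farkas-based proof of Lemma~\ref{rmk:feasible}, and local boundedness, which excludes the ``$f_i\to\infty$ as $\bm{\rho}_{-i}$ approaches a finite point'' reading of an infinite discontinuity rather than merely pointwise finiteness. The only implicit assumption worth flagging is that $\U_j\neq\emptyset$ for each $j\in\J_i$, which the model guarantees by admitting singleton clusters; with that noted, your argument is complete and in fact slightly stronger than what the paper proves.
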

Lemma~\ref{rmk:domain} states that $f_i(\bm{\rho}_{-i})$ is real-valued in its domain. The lemma is induced by Lemma~\ref{rmk:feasible} and that $\rho_i=0$ is a lower bound of~\eqref{eq:minload}, such that 
the optimal objective in the LP cannot be $-\infty$.

\subsection{Standard Interference Function}
\label{subsec:analysis}

Network-wise, we have the function $\vec{f}(\bm{\rho})$ defined element-wisely in~\eqref{eq:fi2} for $\I$. By Theorem~\ref{thm:sif}, $\vec{f}(\bm{\rho})$ is an SIF.
\begin{equation}
\vec{f}(\bm{\rho})=[f_1(\bm{\rho}_{-1}),f_2(\bm{\rho}_{-2}),\ldots,f_n(\bm{\rho}_{-n}))]	
\label{eq:fi2}
\end{equation}
\begin{theorem}
$\vec{f}(\bm{\rho})$ is an SIF, i.e.
the
following properties hold: 
\begin{enumerate} 
\item (Scalability)
$\alpha\vec{f}(\bm{\rho})>\vec{f}(\alpha\bm{\rho}),~
\bm{\rho}\in \mathbb{R}^n_+,~\alpha>1$.
\item (Monotonicity) $\vec{f}(\bm{\rho})\geq \vec{f}(\bm{\rho}')$,
$\bm{\rho} \geq \bm{\rho}'$, $\bm{\rho},\bm{\rho}'\in \mathbb{R}^n_+$.
\end{enumerate} 
\label{thm:sif}
\end{theorem}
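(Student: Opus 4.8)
The plan is to prove both SIF properties componentwise, since $\vec{f}$ is defined element-wise and each $f_i(\bm{\rho}_{-i})$ depends on the other cells' loads \emph{only} through the inter-cell interference terms $I_{kj}=p_kg_{kj}\rho_k$ appearing in the denominator of the SINR in~\eqref{eq:sinr}. With the power fixed to a candidate, raising any $\rho_k$ ($k\neq i$) enlarges every $I_{kj}$, hence lowers every $\gamma_{j\u}$ and every per-RB capacity $c_{j\u}$ via~\eqref{eq:cju}. This single observation drives both arguments. Throughout I would exploit that~\eqref{eq:minload} is an LP whose only coupling to $\bm{\rho}_{-i}$ is through the coefficients $c_{j\u}$ in the demand constraints~\eqref{eq:dj}, and that by Lemmas~\ref{rmk:feasible} and~\ref{rmk:domain} the optimum is attained and finite.

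For monotonicity, suppose $\bm{\rho}\geq\bm{\rho}'$, so $\bm{\rho}_{-i}\geq\bm{\rho}'_{-i}$. First I would note that larger interference gives smaller capacities, i.e. $c_{j\u}(\bm{\rho})\leq c_{j\u}(\bm{\rho}')$ for every $j,\u$. Let $\vec{x}^*$ be optimal for $f_i(\bm{\rho}_{-i})$; since it satisfies $\sum_{\u\in\U_j}c_{j\u}(\bm{\rho})x^*_{\u}\geq d_j$ and the capacities under $\bm{\rho}'$ are no smaller, the same $\vec{x}^*$ stays feasible for the problem defining $f_i(\bm{\rho}'_{-i})$, with identical load $\sum_\u x^*_\u$. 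Hence $f_i(\bm{\rho}'_{-i})\leq f_i(\bm{\rho}_{-i})$, which is exactly the claimed monotonicity.

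For scalability I would fix $\alpha>1$ and show $f_i(\alpha\bm{\rho}_{-i})<\alpha f_i(\bm{\rho}_{-i})$ by exhibiting a good feasible point for the scaled problem. Taking the optimizer $\vec{x}^*$ of $f_i(\bm{\rho}_{-i})$, I would test the candidate $\alpha\vec{x}^*$, whose load is $\alpha\sum_\u x^*_\u=\alpha f_i(\bm{\rho}_{-i})$. Writing the SINR denominator as $A_j+\sigma^2+B_j$ with $B_j=\sum_{k\neq i}p_kg_{kj}\rho_k$ the scalable inter-cell part, replacing $\bm{\rho}_{-i}$ by $\alpha\bm{\rho}_{-i}$ turns $B_j$ into $\alpha B_j$ and produces capacities $c'_{j\u}$. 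The whole argument then reduces to the per-term inequality $\alpha c'_{j\u}>c_{j\u}$: if it holds, then $\sum_\u c'_{j\u}(\alpha x^*_\u)=\alpha\sum_\u c'_{j\u}x^*_\u>\sum_\u c_{j\u}x^*_\u\geq d_j$, so $\alpha\vec{x}^*$ over-satisfies every demand, and trimming the slack in the LP yields a feasible load strictly below $\alpha f_i(\bm{\rho}_{-i})$.

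The hard part will be the per-term inequality $\alpha c'_{j\u}>c_{j\u}$, i.e. $\alpha\log\!\big(1+\tfrac{N}{C+\alpha B}\big)>\log\!\big(1+\tfrac{N}{C+B}\big)$ with $N=p_{j\u}g_{ij}>0$, $C=A_j+\sigma^2\geq 0$, and $B=B_j\geq 0$. I would prove it by setting $L(\alpha)=\alpha\log(1+\tfrac{N}{C+\alpha B})$, noting that $L(1)$ equals the right-hand side, and showing $L'(\alpha)>0$ for $\alpha\geq 1$; differentiating gives $L'(\alpha)=\log(1+\tfrac{N}{t})-\tfrac{\alpha NB}{t(t+N)}$ with $t=C+\alpha B$, and since $\alpha B\leq t$ the subtracted term is at most $\tfrac{N}{t+N}$, while the elementary bound $\log(1+x)>\tfrac{x}{1+x}$ applied to $x=N/t$ gives $\log(1+\tfrac{N}{t})>\tfrac{N}{t+N}$, so $L'(\alpha)>0$. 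The degenerate cases ($B_j=0$ for some $j$, or a cell with zero demand) only make the strict inequality easier, since then scaling $\vec{x}^*$ by $\alpha>1$ already over-delivers. Collecting the componentwise bounds proves both properties and hence that $\vec{f}(\bm{\rho})$ is an SIF.
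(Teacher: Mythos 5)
Your proof is correct and follows essentially the same route as the paper's: monotonicity by reusing the optimal $\vec{x}^{*}$ when every capacity $c_{j\u}$ can only increase as $\bm{\rho}_{-i}$ decreases, and scalability by testing the scaled point $\alpha\vec{x}^{*}$ against the capacities under $\alpha\bm{\rho}$ and then shrinking the strictly slack solution. The one place you go beyond the paper is the per-term inequality $\alpha\,c_{j\u}(\alpha\bm{\rho})>c_{j\u}(\bm{\rho})$, which the paper merely asserts as the ``scalability of $1/c_{j\u}(\bm{\rho})$'' but you actually verify via $L'(\alpha)>0$ and the bound $\log(1+x)>x/(1+x)$ --- a worthwhile addition, since that is precisely where the strictness (and hence the uniqueness of the fixed point) originates, and your argument correctly uses $\sigma^2>0$ to keep the denominator positive and handles the degenerate $B_j=0$ case.
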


The proof of Theorem~\ref{thm:sif} is in the Appendix. Any function satisfying the two properties in Theorem~\ref{thm:sif} falls into the category of SIF. We explain the main properties of SIF as follows. For the non-linear equation system $\vec{f}(\bm{\rho})=\bm{\rho}$, if there exists a feasible solution $\bm{\rho}^{*}\in\mathbb{R}^{n}_{+}$, i.e., equation $\vec{f}(\bm{\rho}^{*})=\bm{\rho}^{*}$ holds, then $\bm{\rho}^{*}$ (named as the fixed point of $\vec{f}(\bm{\rho})$) uniquely exists. Another property is that, $\bm{\rho}^{*}$ can be computed by fixed-point iterations, iteratively by the equation $\bm{\rho}^{(k)}=\vec{f}(\bm{\rho}^{(k-1)})$ with $k\geq 1$ and any $\bm{\rho}^{(0)}\in\mathbb{R}_{+}^{n}$. With the existence of $\bm{\rho}^{*}$, starting from any $\bm{\rho}^{(0)}\in\mathbb{R}_{+}^{n}$, the iterations eventually converge to $\bm{\rho}^{*}$. Denote by $\vec{f}^{k}$ ($k> 1$) the function composition of $\vec{f}(\vec{f}^{k-1}(\bm{\rho}))$. We formally state this property in Lemma~\ref{rmk:lim}.
\begin{lemma}
If $\lim_{k\rightarrow\infty}\vec{f}^{k}(\bm{\rho})$ exists for any $\bm{\rho}\in\mathbb{R}^{n}_{+}$, it exists uniquely for all $\bm{\rho}\in\mathbb{R}^{n}_{+}$ and is independent of $\bm{\rho}$.
\label{rmk:lim}
\end{lemma}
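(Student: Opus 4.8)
The statement is the classical convergence theorem for standard interference functions, so the plan is to deduce it entirely from the two properties of Theorem~\ref{thm:sif}, supplemented by continuity of $\vec{f}$ and strict positivity $\vec{f}(\bm{\rho})>\vec{0}$. I read the hypothesis in the existential sense: the orbit $\bm{\rho}^{(k)}=\vec{f}^{k}(\bm{\rho})$ converges for at least one initial vector, to a limit $\bm{\rho}^{*}$. Because each $f_i$ in~\eqref{eq:fi} is the optimal value of an LP whose coefficients $c_{j\u}$ depend continuously on $\bm{\rho}_{-i}$, and which is feasible (Lemma~\ref{rmk:feasible}) and finite (Lemma~\ref{rmk:domain}), the value function $\vec{f}$ is continuous; letting $k\to\infty$ in $\bm{\rho}^{(k+1)}=\vec{f}(\bm{\rho}^{(k)})$ then gives $\vec{f}(\bm{\rho}^{*})=\bm{\rho}^{*}$. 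Thus a fixed point exists, and the lemma reduces to two claims: the fixed point is unique, and the iteration converges to it from every starting vector.

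Uniqueness I would obtain by the standard scaling argument. Let $\bm{\rho}^{*}$ and $\bm{\rho}'$ be fixed points and set $\alpha=\max_{i}\rho'_i/\rho^{*}_i$; since the two points may be interchanged, assume $\alpha>1$ (if the maximum ratio is $\le 1$ in both orderings the points coincide). Then $\bm{\rho}'\le\alpha\bm{\rho}^{*}$ with equality in some coordinate $i$, so monotonicity gives $\rho'_i=f_i(\bm{\rho}')\le f_i(\alpha\bm{\rho}^{*})$, while scalability gives $f_i(\alpha\bm{\rho}^{*})<\alpha f_i(\bm{\rho}^{*})=\alpha\rho^{*}_i=\rho'_i$, a contradiction. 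Hence the fixed point is unique.

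For convergence from an arbitrary $\bm{\rho}^{(0)}$ I would squeeze it between two scaled copies of $\bm{\rho}^{*}$. Using positivity, choose $\alpha\ge 1$ with $\alpha^{-1}\bm{\rho}^{*}\le\bm{\rho}^{(0)}\le\alpha\bm{\rho}^{*}$ and iterate from all three points. Scalability makes the upper orbit satisfy $\vec{f}(\alpha\bm{\rho}^{*})<\alpha\bm{\rho}^{*}$ while monotonicity keeps it $\ge\bm{\rho}^{*}$, so it is nonincreasing and bounded below; symmetrically the lower orbit is nondecreasing and bounded above. Both monotone bounded orbits converge, their limits are fixed points by continuity, and uniqueness forces each limit to equal $\bm{\rho}^{*}$. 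Monotonicity propagates the initial ordering to every $k$, placing $\bm{\rho}^{(k)}$ between the lower and upper orbits, so the squeeze theorem yields $\bm{\rho}^{(k)}\to\bm{\rho}^{*}$ regardless of $\bm{\rho}^{(0)}$; this is the asserted independence of $\bm{\rho}$.

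The routine part is the uniqueness inequality; the main obstacle is the convergence step, and within it the two supporting facts that the SIF axioms alone do not provide. First, I must justify continuity of the LP value function $\vec{f}$, so that the monotone limits can be recognized as fixed points; I would get this from continuous dependence of the coefficients $c_{j\u}$ on $\bm{\rho}$ together with feasibility and boundedness of~\eqref{eq:fi} (Lemmas~\ref{rmk:feasible} and~\ref{rmk:domain}). Second, the geometric sandwich needs $\bm{\rho}^{*}>\vec{0}$, which I would derive from the fact that every cell serves strictly positive demand $d_j$, forcing $f_i>0$ on $\mathbb{R}^{n}_{+}$.
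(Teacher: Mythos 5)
Your proof is correct in substance, but note that the paper does not actually prove Lemma~\ref{rmk:lim} at all: it states the fixed-point uniqueness and global convergence as known consequences of the SIF axioms established in Theorem~\ref{thm:sif}, deferring to the standard-interference-function literature cited in Section~II-E. What you have written is essentially Yates' classical argument reconstructed from scratch --- the max-ratio scaling contradiction for uniqueness and the sandwich between $\alpha^{-1}\bm{\rho}^{*}$ and $\alpha\bm{\rho}^{*}$ for convergence --- and both steps are sound. Two refinements are worth making. First, your continuity argument leans on continuity of an LP value function in which $\bm{\rho}_{-i}$ perturbs the \emph{constraint} coefficients $1/c_{j\u}$; value functions of LPs with perturbed constraint matrices are not continuous in general, so this route needs more justification than you give it. The cleaner path is to observe that continuity on the open orthant already follows from the two SIF axioms themselves: monotonicity and scalability give $f_i(\bm{\rho})\leq f_i(\alpha\bm{\rho})<\alpha f_i(\bm{\rho})$ and $\alpha^{-1}f_i(\bm{\rho})<f_i(\alpha^{-1}\bm{\rho})\leq f_i(\bm{\rho})$ for $\alpha>1$, and sandwiching any sequence $\bm{\rho}_k\rightarrow\bm{\rho}>\vec{0}$ between $\alpha^{-1}\bm{\rho}$ and $\alpha\bm{\rho}$ with $\alpha\downarrow 1$ yields continuity without any LP sensitivity analysis. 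Second, if $\bm{\rho}^{(0)}$ has a zero coordinate the bound $\alpha^{-1}\bm{\rho}^{*}\leq\bm{\rho}^{(0)}$ fails for every finite $\alpha$; this is repaired by noting $\bm{\rho}^{(1)}=\vec{f}(\bm{\rho}^{(0)})>\vec{0}$ (your positivity observation) and starting the sandwich one step later. With these two patches your argument is a complete, self-contained proof of a statement the paper only asserts by reference.
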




\subsection{User Clustering}
\label{subsec:nopower}
\textit{MinF} with fixed power allocation is essentially a user clustering problem.
Based on the analysis in Section~\ref{subsec:analysis}, we derive sufficient and necessary conditions for \textit{MinF} with fixed power allocation, in terms of its feasibility and optimality, in Theorem~\ref{thm:feasibility} and Theorem~\ref{thm:optimality}, respectively. The proofs of both theorems are detailed in the Appendix due to their rather technical nature. Note that though the variables in \textit{MinF} (with fixed power) are $\bm{\rho}$ and $\vec{x}$, the conditions shown in the two theorems only concern $\bm{\rho}$. This is because, when evaluating the function $\vec{f}(\bm{\rho})$, $\vec{x}$ is accordingly computed by solving corresponding LPs in \eqref{eq:minload}. Thus we omit $\vec{x}$ in our following discussion for the sake of presentation.

\begin{theorem}
For fixed-power \textit{MinF},
$\bm{\rho}$ ($\bm{\rho}\leq\bar{\rho}\vec{1}$) is feasible if
and only if the load $\vec{f}(\bm{\rho})$ is feasible and
$\bm{\rho}\geq \vec{f}(\bm{\rho})$.
\label{thm:feasibility}
\end{theorem}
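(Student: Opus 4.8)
The plan is to reduce feasibility of a whole target load vector to a cell-by-cell statement governed by the single-cell minimum-load functions $f_i$. The key starting observation is that once $\bm{\rho}$ is held fixed, every inter-cell term $I_{kj}=p_kg_{kj}\rho_k$ in \eqref{eq:sinr} is a constant, so all capacities $c_{j\u}$ in \eqref{eq:cju} are constants and the constraints \eqref{eq:dj} and \eqref{eq:rhoi} no longer couple distinct cells; the only coupling is carried by the fixed $\bm{\rho}$ itself. Hence ``$\bm{\rho}$ is feasible'' means exactly that for every cell $i$ there exists $\vec{x}_i\geq\vec{0}$ with $\sum_{\u\in\U_i}x_{\u}=\rho_i\leq\bar{\rho}$ satisfying \eqref{eq:dj}, the QoS data being evaluated at the interference induced by $\bm{\rho}_{-i}$. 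This is precisely the feasible region of the LP \eqref{eq:fi} whose optimum is $f_i(\bm{\rho}_{-i})$, so the whole argument rests on comparing $\rho_i$ against that optimum.

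For the forward direction I would take a witnessing allocation $\vec{x}$ for a feasible $\bm{\rho}$ and, for each $i$, note that its restriction $\vec{x}_i$ is a feasible point of the LP defining $f_i(\bm{\rho}_{-i})$ with objective value $\sum_{\u\in\U_i}x_{\u}=\rho_i$; minimality of $f_i$ then forces $f_i(\bm{\rho}_{-i})\leq\rho_i$, i.e. $\vec{f}(\bm{\rho})\leq\bm{\rho}$. Since feasibility of $\bm{\rho}$ includes $\bm{\rho}\leq\bar{\rho}\vec{1}$ through \eqref{eq:rhoi}, chaining gives $\vec{f}(\bm{\rho})\leq\bar{\rho}\vec{1}$, which is what ``$\vec{f}(\bm{\rho})$ is feasible'' asserts. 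Lemmas \ref{rmk:feasible} and \ref{rmk:domain} guarantee the LP is solvable and $f_i$ finite, so these quantities are well defined.

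For the reverse direction, assuming $\vec{f}(\bm{\rho})\leq\bar{\rho}\vec{1}$ and $\bm{\rho}\geq\vec{f}(\bm{\rho})$, I would for each cell solve \eqref{eq:fi} to obtain an optimal $\vec{x}_i^{*}\geq\vec{0}$ with $\sum_{\u\in\U_i}x_{\u}^{*}=f_i(\bm{\rho}_{-i})\leq\rho_i$ meeting \eqref{eq:dj}, and then inflate it: pick any cluster $\u\in\U_i$ and add the slack $\rho_i-f_i(\bm{\rho}_{-i})\geq0$ to $x_{\u}^{*}$ so that $\sum_{\u\in\U_i}x_{\u}=\rho_i$ exactly. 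Because each delivered-bit term $\sum_{\u\in\U_j}c_{j\u}x_{\u}$ has non-negative coefficients and is thus non-decreasing in every $x_{\u}$, padding never violates \eqref{eq:dj}, and $\rho_i\leq\bar{\rho}$ keeps the load within the cap. Assembling the per-cell allocations produces a single $\vec{x}\geq\vec{0}$ realizing $\bm{\rho}$; since $\bm{\rho}$, and hence all interference, is held fixed throughout, inflating one cell perturbs no other cell's capacities, so all QoS constraints hold simultaneously and $\bm{\rho}$ is feasible.

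The main obstacle is the exactness of the load equation \eqref{eq:rhoi}: feasibility demands $\rho_i=\sum_{\u\in\U_i}x_{\u}$, not merely $\sum_{\u\in\U_i}x_{\u}\geq f_i(\bm{\rho}_{-i})$, because the realized load feeds back into every other cell's interference. The inflation step is what closes this gap, and it must lean on the monotonicity of delivered bits in $\vec{x}$ together with the fact that $\bm{\rho}$ stays fixed so that local padding leaves the coupling untouched. I also note that ``$\vec{f}(\bm{\rho})$ is feasible'' is, in the reverse direction, already implied by $\vec{f}(\bm{\rho})\leq\bm{\rho}\leq\bar{\rho}\vec{1}$; stating it explicitly simply renders the characterization symmetric and anticipates the fixed-point iteration on $\vec{f}$ used later.
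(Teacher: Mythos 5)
Your argument is correct and follows essentially the same route as the paper: fix $\bm{\rho}$ so the system decouples cell by cell, use LP minimality of $f_i(\bm{\rho}_{-i})$ to get $\vec{f}(\bm{\rho})\leq\bm{\rho}$ for necessity (the paper packages this as a small contrapositive lemma), and reconstruct a realizing allocation for sufficiency. Two remarks. First, your explicit padding step is a genuine improvement in rigor: the paper's sufficiency argument hands back the $\vec{x}_i$ that sums to $f_i(\bm{\rho}_{-i})$ and declares $\bm{\rho}$ feasible, silently skipping the equality $\rho_i=\sum_{\u\in\U_i}x_{\u}$ in \eqref{eq:rhoi} when $\rho_i>f_i(\bm{\rho}_{-i})$; your inflation of one $x_{\u}$, justified by monotonicity of the delivered bits and the fact that interference depends only on the fixed $\bm{\rho}$, is exactly what closes that hole. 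Second, a small caveat on the forward direction: the paper reads ``$\vec{f}(\bm{\rho})$ is feasible'' as the stronger claim that $\vec{f}(\bm{\rho})$, paired with the LP minimizers, is itself a feasible point of \textit{MinF} --- this is what licenses the subsequent remark that applying $\vec{f}$ improves any feasible solution --- whereas you reduce it to $\vec{f}(\bm{\rho})\leq\bar{\rho}\vec{1}$. The stronger claim does follow, but needs one extra line you do not write: since $\vec{f}(\bm{\rho})\leq\bm{\rho}$, each cell's interference can only decrease, so by monotonicity of $c_{j\u}$ the minimizing $\vec{x}_i$ still satisfies \eqref{eq:dj} under the load vector $\vec{f}(\bm{\rho})$ (equivalently, apply your own sufficiency direction to $\vec{f}(\bm{\rho})$, using $\vec{f}(\vec{f}(\bm{\rho}))\leq\vec{f}(\bm{\rho})$ from Theorem~\ref{thm:sif}). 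With that sentence added, your proof fully subsumes the paper's.
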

Besides feasibility, for problem solving, 
Theorem~\ref{thm:feasibility} provides an efficient and effective method for improving any feasible solution to \textit{MinF}. For any feasible solution $\bm{\rho}$, $\vec{f}(\bm{\rho})$ yields a better one\footnote{Rigorously speaking, the new solution is only guaranteed to be no worse by Theorem~\ref{thm:feasibility}. However in fact it is guaranteed to be better unless the old one is already at the optimum. A proof can be easily derived based on Lemma~\ref{rmk:fp}.}. One can compute $\vec{f}(\bm{\rho})$ and use it to replace $\bm{\rho}$ as a better solution for \textit{MinF}, by solving $n$ LP problems.


\begin{theorem} 
Load $\bm{\rho}^{*}$ is at the optimum of fixed-power 
\textit{MinF} if and only if $\bm{\rho}^{*} = \vec{f}(\bm{\rho}^{*})\leq\bar{\rho}\vec{1}$.
\label{thm:optimality}
\end{theorem}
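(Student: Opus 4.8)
The plan is to prove Theorem~\ref{thm:optimality} by combining the feasibility characterization of Theorem~\ref{thm:feasibility} with the fixed-point theory of standard interference functions established in Theorem~\ref{thm:sif} and Lemma~\ref{rmk:lim}. The statement is an ``if and only if'', so I would handle the two directions separately. For the direction where $\bm{\rho}^{*}=\vec{f}(\bm{\rho}^{*})\leq\bar{\rho}\vec{1}$ implies optimality, I would first note that by Theorem~\ref{thm:feasibility} the condition $\bm{\rho}^{*}\geq\vec{f}(\bm{\rho}^{*})$ (here with equality) together with $\bm{\rho}^{*}\leq\bar{\rho}\vec{1}$ certifies that $\bm{\rho}^{*}$ is feasible for \textit{MinF}. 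It then remains to argue that no feasible load can be componentwise smaller; this is exactly where the SIF structure enters.

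The core argument would be to show that the fixed point $\bm{\rho}^{*}$ is the componentwise-smallest feasible load. First I would take an arbitrary feasible $\bm{\rho}$; by Theorem~\ref{thm:feasibility} it satisfies $\bm{\rho}\geq\vec{f}(\bm{\rho})$. Applying monotonicity of $\vec{f}$ from Theorem~\ref{thm:sif} repeatedly, the iterates $\vec{f}^{k}(\bm{\rho})$ form a nonincreasing sequence bounded below by $\vec{0}$, hence $\lim_{k\rightarrow\infty}\vec{f}^{k}(\bm{\rho})$ exists. By Lemma~\ref{rmk:lim} this limit is unique and independent of the starting point, and since the fixed point $\bm{\rho}^{*}$ satisfies $\vec{f}^{k}(\bm{\rho}^{*})=\bm{\rho}^{*}$ for all $k$, the common limit must equal $\bm{\rho}^{*}$. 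Because each iterate satisfies $\bm{\rho}\geq\vec{f}(\bm{\rho})\geq\vec{f}^{2}(\bm{\rho})\geq\cdots\geq\bm{\rho}^{*}$, we conclude $\bm{\rho}\geq\bm{\rho}^{*}$ for every feasible $\bm{\rho}$. Since $F$ is monotonically increasing in $\bm{\rho}$, $F(\bm{\rho}^{*})\leq F(\bm{\rho})$, establishing optimality.

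For the converse, suppose $\bm{\rho}^{*}$ is optimal for \textit{MinF}. Feasibility and Theorem~\ref{thm:feasibility} give $\bm{\rho}^{*}\geq\vec{f}(\bm{\rho}^{*})$ and $\bm{\rho}^{*}\leq\bar{\rho}\vec{1}$, so it suffices to rule out strict inequality in any component. If $\bm{\rho}^{*}\geq\vec{f}(\bm{\rho}^{*})$ with strict inequality somewhere, then $\vec{f}(\bm{\rho}^{*})$ is itself feasible (again by Theorem~\ref{thm:feasibility}, noting $\vec{f}(\bm{\rho}^{*})\leq\bm{\rho}^{*}\leq\bar{\rho}\vec{1}$ and applying monotonicity to verify the self-mapping inequality $\vec{f}(\bm{\rho}^{*})\geq\vec{f}(\vec{f}(\bm{\rho}^{*}))$), and it is componentwise no larger than $\bm{\rho}^{*}$ but strictly smaller in at least one coordinate. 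Monotonicity of $F$ then yields $F(\vec{f}(\bm{\rho}^{*}))\leq F(\bm{\rho}^{*})$, contradicting optimality whenever $F$ is strictly increasing; for the general monotone (possibly nonstrict) case I would instead invoke the footnote's observation that iterating $\vec{f}$ strictly improves a non-optimal solution, which is the substance of Lemma~\ref{rmk:fp}.

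The main obstacle I anticipate is the converse direction under merely monotone (not strictly monotone) $F$: ruling out a fixed-point failure requires showing that $\bm{\rho}^{*}>\vec{f}(\bm{\rho}^{*})$ in some coordinate genuinely contradicts optimality, which needs the strict-improvement property of the SIF iteration rather than monotonicity of $F$ alone. I would therefore lean on Lemma~\ref{rmk:fp} (the strict-improvement fact referenced in the footnote after Theorem~\ref{thm:feasibility}) to close this gap cleanly, reducing the converse to the assertion that any feasible non-fixed-point can be strictly decreased and hence cannot be the componentwise-minimal feasible load that optimality of a monotone $F$ forces it to be.
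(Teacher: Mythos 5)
Your proposal is correct and follows essentially the same route as the paper's proof: the sufficiency direction is identical (feasible $\bm{\rho}\geq\vec{f}(\bm{\rho})$, monotone nonincreasing iterates, uniqueness of the limit via Lemma~\ref{rmk:lim}, hence the fixed point is the componentwise-minimal feasible load), and your necessity direction is a one-step contradiction where the paper instead iterates to the limit $\bm{\rho}'=\lim_{k\to\infty}\vec{f}^{k}(\bm{\rho}^{*})$ and forces $\bm{\rho}'=\bm{\rho}^{*}$ by optimality --- a cosmetic difference, since both arguments need $F$ to be increasing in every coordinate. One small correction to your contingency plan: Lemma~\ref{rmk:fp} states only that feasibility of \textit{MinF} is equivalent to existence of a fixed point; it does not supply the strict-improvement property, and for a merely weakly monotone $F$ (e.g., constant) the necessity direction is genuinely false, so no lemma can rescue it --- but the paper's own proof makes the same implicit strict-monotonicity assumption, so this is not a gap relative to the paper.
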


Theorem~\ref{thm:optimality} shows that, the optimal solution of \textit{MinF} is at the fixed point of the function $\vec{f}(\bm{\rho})$. In addition, Theorem~\ref{thm:optimality} reveals the relationship between the feasibility of the proposed model and the function $\vec{f}(\bm{\rho})$. Suppose that we have a feasible solution $\bm{\rho}$ for \textit{MinF}. Since \textit{MinF} is bounded below by $F(\vec{0})$, we conclude the existence of the optimum of \textit{MinF}, which, by Theorem~\ref{thm:optimality}, resulting in the existence of the fixed point for $\vec{f}(\bm{\rho})$. Hence the existence of the fixed point for $\vec{f}(\bm{\rho})$ is necessary for the feasibility of \textit{MinF}. Also, if the fixed point of $\vec{f}(\bm{\rho})$ exists (i.e. there is a $\bm{\rho}^{*}$ such that $\bm{\rho}^{*} = \vec{f}(\bm{\rho}^{*})$ holds), then $\bm{\rho}^{*}$ is an optimal solution to \textit{MinF}, showing $\bm{\rho}^{*}$ feasibility. Therefore the existence of the fixed point for $\vec{f}(\bm{\rho})$ is sufficient for the feasibility of \textit{MinF} as well. The conclusion is summarized in Lemma~\ref{rmk:fp} below.

\begin{lemma}
The fixed-power \textit{MinF} is feasible if and only if the fixed point exists for $\vec{f}(\bm{\rho})$.
\label{rmk:fp}	
\end{lemma}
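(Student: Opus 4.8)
The plan is to read Lemma~\ref{rmk:fp} as a corollary of Theorem~\ref{thm:optimality}, which already binds the optimum of fixed-power \textit{MinF} to the fixed point of $\vec{f}(\bm{\rho})$, and then to close the two gaps that Theorem~\ref{thm:optimality} leaves open: in the ``only if'' direction, that an optimum actually \emph{exists} once the problem is feasible; in the ``if'' direction, that the fixed point in fact lies in the feasible box $[\,\vec{0},\bar{\rho}\vec{1}\,]$. Once both gaps are filled, each implication is a one-line appeal to Theorem~\ref{thm:optimality}.

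For the \emph{only if} direction (feasibility $\Rightarrow$ fixed point exists), I would take any feasible $\bm{\rho}_0$ with $\bm{\rho}_0\le\bar{\rho}\vec{1}$, which exists by hypothesis. Theorem~\ref{thm:feasibility} gives $\bm{\rho}_0\ge\vec{f}(\bm{\rho}_0)$. Applying the monotonicity of Theorem~\ref{thm:sif} repeatedly to this inequality makes the iterates $\vec{f}^{k}(\bm{\rho}_0)$ componentwise nonincreasing and bounded below by $\vec{0}$ (loads are nonnegative), hence convergent. Lemma~\ref{rmk:lim} then upgrades this single convergence to a unique limit $\bm{\rho}^{*}$ independent of the starting point, and a standard continuity step, $\vec{f}(\bm{\rho}^{*})=\vec{f}(\lim_k\vec{f}^{k}(\bm{\rho}_0))=\lim_k\vec{f}^{k+1}(\bm{\rho}_0)=\bm{\rho}^{*}$, identifies $\bm{\rho}^{*}$ as the fixed point; since $\bm{\rho}^{*}\le\bm{\rho}_0\le\bar{\rho}\vec{1}$ it also respects the load limit. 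A shorter but less self-contained alternative, matching the paragraph preceding the lemma, is to note that \textit{MinF} minimizes the monotone $F(\bm{\rho})$ over a nonempty subset of the compact box $[\,\vec{0},\bar{\rho}\vec{1}\,]$ with lower bound $F(\vec{0})$, so the minimum is attained at some $\bm{\rho}^{*}$, whereupon Theorem~\ref{thm:optimality} yields $\bm{\rho}^{*}=\vec{f}(\bm{\rho}^{*})$ directly.

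For the \emph{if} direction (fixed point exists $\Rightarrow$ feasibility), I would take the fixed point $\bm{\rho}^{*}$ with $\vec{f}(\bm{\rho}^{*})=\bm{\rho}^{*}$ and $\bm{\rho}^{*}\le\bar{\rho}\vec{1}$. Theorem~\ref{thm:optimality} certifies $\bm{\rho}^{*}$ as an optimal solution of fixed-power \textit{MinF}, and optimality entails feasibility, so \textit{MinF} is feasible.

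The hard part will be the two regularity/boundary issues rather than any algebra. In the \emph{only if} direction the delicate point is guaranteeing that the infimum of $F$ is genuinely \emph{attained}; the monotone-iteration route through Theorem~\ref{thm:feasibility}, Theorem~\ref{thm:sif}, and Lemma~\ref{rmk:lim} avoids imposing continuity or compactness hypotheses on $F$ and is therefore the safer argument. In the \emph{if} direction the subtlety is that a fixed point of the SIF $\vec{f}$ is unique but need not a priori satisfy $\bm{\rho}^{*}\le\bar{\rho}\vec{1}$, so I would either interpret ``the fixed point exists'' as existence within the feasible box $[\,\vec{0},\bar{\rho}\vec{1}\,]$ or observe that the load-limit constraint is the only thing separating the fixed point from feasibility, which is exactly what Theorem~\ref{thm:optimality} packages. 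With these two points settled, both implications follow immediately from the already-established theorems.
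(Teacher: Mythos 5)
Your proposal is correct and, in its skeleton, matches the paper: both directions are read off from Theorem~\ref{thm:optimality}, and your ``if'' direction (fixed point $\Rightarrow$ optimal $\Rightarrow$ feasible) is exactly the paper's argument. Where you genuinely diverge is the ``only if'' direction. The paper argues: feasibility plus the lower bound $F(\vec{0})$ implies the \emph{existence of the optimum}, and then Theorem~\ref{thm:optimality} places that optimum at the fixed point. As you observe, the step from ``bounded below'' to ``minimum attained'' is the weak link (an infimum need not be achieved without a compactness or closedness argument), and the paper does not close it. Your primary route --- start from a feasible $\bm{\rho}_0$, use Theorem~\ref{thm:feasibility} to get $\bm{\rho}_0\geq\vec{f}(\bm{\rho}_0)$, drive the monotone nonincreasing iterates $\vec{f}^{k}(\bm{\rho}_0)$ down to a limit, and identify the limit as the fixed point --- bypasses the attainment question entirely and is the more self-contained argument; it is essentially the same machinery the paper itself deploys inside the proof of Theorem~\ref{thm:optimality}, just redeployed here. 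Your second flagged subtlety, that the fixed point must additionally satisfy $\bm{\rho}^{*}\leq\bar{\rho}\vec{1}$ for the ``if'' direction to go through (since Theorem~\ref{thm:optimality} requires $\bm{\rho}^{*}=\vec{f}(\bm{\rho}^{*})\leq\bar{\rho}\vec{1}$), is a real imprecision in the lemma as stated that the paper glosses over; your resolution --- either read ``the fixed point exists'' as existence within the box, or note via your monotone iteration that the fixed point obtained from a feasible start automatically satisfies $\bm{\rho}^{*}\leq\bm{\rho}_0\leq\bar{\rho}\vec{1}$ --- is the right repair. In short: same high-level decomposition, but your version is the more careful one on both counts.
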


Starting from any $\bm{\rho}^{(0)}\in\mathbb{R}^{+}_{n}$, we run the fixed point iterations $\bm{\rho}^{(k)}=\vec{f}(\bm{\rho}^{(k-1)})$ for $k\geq 1$. During each iteration, $n$ LPs in~\eqref{eq:minload} for $i\in\I$ are respectively solved. At the convergence, the optimum is reached. One may also strike a balance between the optimality and the computational efficiency. Once we know that $\bm{\rho}^{(k)}$ is feasible for any $k\geq 0$, then all $\bm{\rho}^{(k+1)},\bm{\rho}^{(k+2)},\ldots$ are feasible as well, and one can terminate the iterations at any step after $k$, to obtain a sub-optimal solution. 
\begin{figure}[!ht]
\centering
  \includegraphics[width=0.9\linewidth]{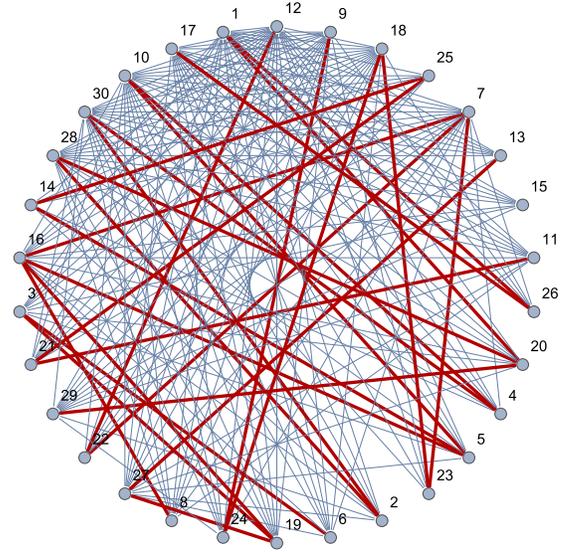}
  \caption{This figure comes from one of our simulations and it is used as an illustration for one cell's user clustering in multi-cell scenarios. There are $30$ UEs within this cell. Each vertex represents a UE and each edge a candidate clustering option. Starting from UE $12$, all UEs are sorted decreasingly according to its power gain from the cell and arranged clockwise (i.e. UE $12$ has the best channel condition and UE $1$ has the worst). The highlighted $28$ edges are selected among all $181$ candidate ones  by~\eqref{eq:x} and~\eqref{eq:Ustar} via solving LPs. Note that not all the UEs are expected to use NOMA, e.g., UE $15$ is not clustered with others. From the visualization, it rarely happens that one UE is clustered with another with similar channel condition (e.g. its near neighbors in the circle).}
 \label{fig:Ustar}
\end{figure}

 Mathematically, the corresponding $\vec{x}^{*}$ for $\bm{\rho}^{*}$ is formulated in~\eqref{eq:x} with $i\in\I$.
\begin{equation}
	\vec{x}^{*}_i = \argmin_{\vec{x}_i} f_i(\bm{\rho}^{*})
\label{eq:x}
\end{equation}
Accordingly we obtain the optimal user clustering solution, denoted by $\U^{*}$ ($\U^{*}\subseteq\U$), in~\eqref{eq:Ustar}. 
\begin{equation}
\U^{*} = \{\u~|~x^{*}_{\u}>0, \u\in\U\}
\label{eq:Ustar}
\end{equation}
\figurename~\ref{fig:Ustar} gives an illustration of user clustering. Note that not all the UEs are expected to use NOMA, and 
the clustering occurs between UEs with large variation in channel conditions.

For a cell $i$ ($i\in\I$), given the information of other cells' load $\bm{\rho}_{-i}$, solving $f_i(\bm{\rho})$ is based on local information, making it suitable to be run in a distributed manner. The technique called ``asynchronous fixed-point iterations''~\cite{414651} can be used. It means that for an arbitrary subset $\I^{\text{sub}}$ ($\I^{\text{sub}}\subseteq\I$)
 one can do fixed-point iterations for $\vec{f}(\bm{\rho})$ by following the rules that 1) $\rho_i^{(k)}=f_i(\bm{\rho}^{(k-1)})$ for any $i\in\I^{\text{sub}}$ and 2) $\rho_i^{(k)}=\rho_i^{(k-1)}$ for any $i\in\I\backslash\I^{\text{sub}}$, and $k\geq 1$, without loss of the convergence property. The solution obtained by such an iterative process still possesses feasibility for \textit{MinF}, as can be verified by Theorem~\ref{thm:feasibility}. The asynchronous fixed-point iterations converge to the fixed point of $\vec{f}(\bm{\rho})$ and optimality holds as well~\cite{414651}\footnote{An intuitive explanation is that, the fixed point is unique, regardless of how we reach it.}. Therefore, it is sufficient for a cell to have information of a subset of cells (e.g., the surrounding cells) having major significance in terms of interference. The update for such information is very local and hence easily implemented via the LTE X2 interface. 
 
\subsection{Solving MinF}
\label{subsec:power}
All the conclusions in Section~\ref{sec:opt} still hold with power allocation taken into consideration. An intuitive explanation is provided as follows. First, note that one can decompose the power allocation in terms of cells, as it can be seen in~\eqref{eq:sinr} that, any cell $i$ interferes with other cells with $p_i$, independent of the power splitting scheme being used in cell $i$. Consider any cell $i\in\I$. Suppose there are in total $K_i$ allocation schemes for cell $i$. 
Respectively, denote by 
$f^{[1]}_i(\bm{\rho}_{-i}),f^{[2]}_i(\bm{\rho}_{-i}),\ldots,f^{[K_i]}_i(\bm{\rho}_{-i})$ 
the function $f_i(\bm{\rho}_{-i})$ in Section~\ref{subsec:analysis} under 
the $K_i$ candidate power allocations. The load optimization problem in Section~\ref{subsec:load_opt} evolves to \eqref{eq:fi'}.
\begin{equation}
f'_i(\bm{\rho}_{-i})=\min\left\{f^{[1]}_i(\bm{\rho}_{-i}),f^{[2]}_i(\bm{\rho}_{-i}),\ldots,f^{[K_i]}_i(\bm{\rho}_{-i})\right\}
\label{eq:fi'}
\end{equation}

The function in~\eqref{eq:fi'} is also SIF, because both the scalability and the monotonicity hold for $f'_i(\bm{\rho}_{-i})$. We denote by $\vec{f}'(\bm{\rho})$ the vector version of~\eqref{eq:fi'} with $i\in\I$. As the SIF properties hold for the new function $\vec{f}'(\bm{\rho})$, all the conclusions in Section~\ref{subsec:nopower} naturally remain valid for $\vec{f}'(\bm{\rho})$, accordingly with the notation $\vec{f}$ (or $f_i$) changed to $\vec{f}'$ (or $f'_i$) in all the theorems' statements as well as in their corresponding proofs, and the word ``fixed-power'' in all the theorems' statements can thus be removed. 
Note that for evaluating the expression of $\vec{f}'(\bm{\rho})$, one needs to solve $\sum_{i=1}^{n}K_{i}$ instead of $n$ LP problems as for $\vec{f}(\bm{\rho})$. 

Denote by $\bm{\rho}'^{*}$ the fixed point of $\vec{f}'(\bm{\rho})$, i.e. $\bm{\rho}'^{*}=\vec{f}'(\bm{\rho}'^{*})$. We use $\vec{p}_i$ to represent the vector of $p_{j\u}$ with $u\in\U_i$ and $j\in\u$, and $\P_i$ the candidate set of power allocation schemes for cell $i$ (i.e. the $K_i$ schemes as mentioned above). The optimal power allocation $\vec{p}^{*}_i$ for \textit{MinF} is given by~\eqref{eq:power_allocation}, for $i\in\I$. In other words, $\vec{p}^{*}_i$ corresponds to the $k_{th}$ power allocation scheme, which leads to the minimum among all the $K_i$ functions $f_{i}^{[k]}(\bm{\rho}'^{*}_{-i})$ ($k\in[1,K_i]$) in~\eqref{eq:fi'} at the convergence. 
\begin{equation}
\vec{p}_i^{*} = \argmin_{\vec{p}_i\in\P_i} f'_{i}(\bm{\rho}'^{*})	
\label{eq:power_allocation}
\end{equation}
The optimal clustering with power allocation $\vec{p}^{*}$ can be obtained by using~\eqref{eq:x} and~\eqref{eq:Ustar} with $f_i$ replaced by $f'_i$ and $\bm{\rho}^{*}$ replaced by $\bm{\rho}'^{*}$ respectively. 
\section{Numerical Results}
\label{sec:numerical}

\subsection{Simulation Settings}

We consider three performance metrics, the total load $\norm{\bm{\rho}}_1$, the maximum load $\norm{\bm{\rho}}_{\infty}$, and the efficiency of achieved rate on RBs. The rate efficiency is defined as the ratio between the sum of all user demands and the total of consumed RBs. 
We consider heterogeneous network scenarios in the simulation. Six small cells (SCs) are deployed around one macro cell (MC). The parameter setting is in~\tablename~\ref{tab:sim}. 
\begin{table}[!h]
\centering
\caption{Simulation Parameters.}
\begin{tabular}{ll}
\toprule
\textbf{Parameter} & \textbf{Value} \\
Cell radius & 500 m\\
Carrier frequency & 2 GHz \\
Total bandwidth & 20 MHz\\
Cell coverage radius & MC: 500 m; SC: 100 m \\
Number of users & $\{70,140,210,280,350\}$ \\
Cell load limit $\bar{\rho}$ & $\{0.4,0.6,0.8,1.0\}$ \\
Path loss & COST-231-HATA \\
Shadowing (Log-normal) & MC: 8 dB standard deviation\\
					   & SC: 4 dB standard deviation\\
Fading & Rayleigh flat fading \\
Noise power spectral density & -173 dBm/Hz \\
Total power on RB & MC: 800 mW \\
				  & SC: 100 mW\\
$\alpha_{\textnormal{FTPC}}$ & $\{0.2,0.4,0.6,0.8\}$ \\
$\alpha_{\textnormal{NTT}}$ & $\{0.1,0.2,0.3,0.4\}$ \\
\bottomrule
\end{tabular}
\label{tab:sim}	
\end{table}

The UE demands are set in correspondence to the value of $\bar{\rho}$, such that the load of at least one cell in OMA reaches the limit $\bar{\rho}$. Two power allocation schemes are used for performance comparison. The ``fractional transmit power control'' (FTPC) proposed in~\cite{6666209} uses a parameter $\alpha_{\textnormal{FTPC}}\in[0,1]$ to control the fairness for power splitting among UEs. In \cite{R1-153332}, power allocation based on a pre-determined power ratio set is suggested, with a proportion $\alpha_{\textnormal{NTT}}\in(0,0.5)$ for allocating power to the UE with better channel condition. We use ``NTT'' to represent this power allocation scheme. Two sets of candidate parameter values of $\alpha_{\textnormal{FTPC}}$ and $\alpha_{\textnormal{NTT}}$ in \tablename~\ref{tab:sim} are used respectively for the two power allocation schemes, in computing~\eqref{eq:power_allocation}. Uniform power allocation (i.e. two UEs in NOMA are allocated with the same amount of power), referred to as ``Uniform'', is used as reference. OMA is used as the baseline for performance benchmarking. The other parameters in~\tablename~\ref{tab:sim} are coherent with~\cite{tr36913}. 

\subsection{Performance Evaluation}
In summary, the numerical results show significant improvement by NOMA on resource and rate efficiency. Power allocation plays an important role in enhancing the performance in multi-cell NOMA. NOMA is promising in the scenario with intensive data traffic and high user densities.

In \figurename~\ref{fig:sumload}, with higher demand, the reduction on total load achieved by NOMA becomes larger, meaning that NOMA is preferred in the scenarios with high traffic density. There is no difference between FTPC and NTT. On the other hand, both are considerably better than Uniform, meaning that power allocation in multi-cell NOMA has significant influence on resource efficiency.

\begin{figure}[!h]
\centering
  \includegraphics[width=\linewidth]{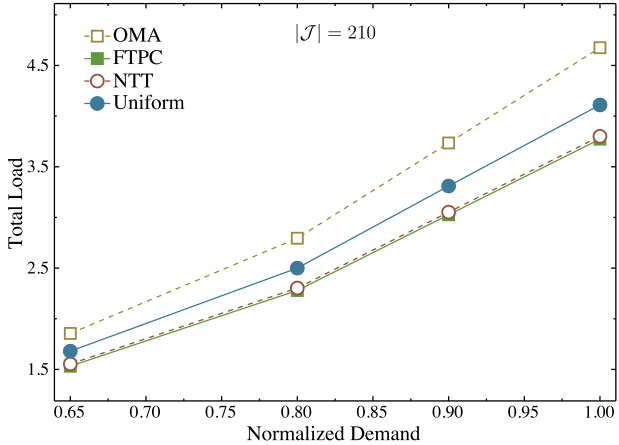}
  \caption{Performance of the total load of the network. The objective is $F(\bm{\rho})=\norm{\bm{\rho}}_1$. The number of UEs is $210$.}
 \label{fig:sumload}
\end{figure}
\begin{figure}[!h]
\centering
  \includegraphics[width=\linewidth]{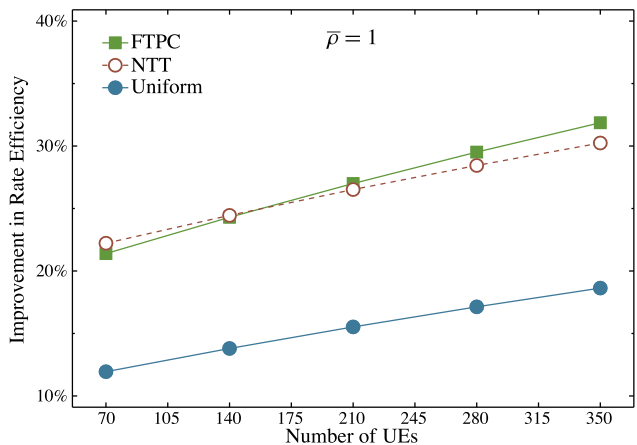}
  \caption{Performance of the improvement in rate efficiency. OMA is the baseline. The load limit $\bar{\rho}$ equals $1.0$, meaning that for every data point, at least one cell in OMA is at full load.}
\label{fig:avgrate}
\end{figure}

\figurename~\ref{fig:avgrate} shows the rate efficiency improvement with respect to the network density, with OMA being the baseline. The parameter $\bar{\rho}$ is set to $1.0$ such that at least one cell in OMA is in full load. With the increase of the network density, NOMA achieves larger improvement in rate efficiency. Compared to Uniform, both FTPC and NTT lead to better performance, and FTPC has slight advantage over NTT when the user density is large. On the contrary, NTT leads to slightly better performance with low user density. The difference on the rate performance between Uniform and the other two becomes higher with the increase of the network density.
\begin{table}[!h]
\centering
  \caption{Performance Evaluation with $F(\bm{\rho})=\norm{\bm{\rho}}_{\infty}$.}
  \begin{tabular}{lcc}
  \toprule
    \textbf{Scheme} & \textbf{Load Reduction} & \textbf{Improvement in Rate Efficiency} \\
    FTPC & 19.9\% & 25.2\%  \\
    NTT & 19.4\% & 24.2\% \\
    Uniform & 11.7\% & 13.0\% \\
   \bottomrule
  \end{tabular}
\label{tab:maxload}
\end{table}

For $F(\bm{\rho})=\norm{\bm{\rho}}_{\infty}$, we minimize the load for the most heavy-loaded cell in the network, and evaluate the performance in terms of its load reduction and rate efficiency improvement. The settings of demands and the number of UEs follow those in \figurename~\ref{fig:sumload} and \figurename~\ref{fig:avgrate}. By using OMA as the baseline, the numerical results of improvement are averaged and summarized in \tablename~\ref{tab:maxload}, which is coherent with the results in \figurename~\ref{fig:sumload} and \figurename~\ref{fig:avgrate}.

%
\section{Conclusion}
\label{sec:conclusion}
In this paper, multi-cell NOMA has been put into an optimization framework. We conclude that NOMA is a promising technique for raising spectrum efficiency, especially in scenarios with intensive data traffic and high user densities. 



\section{Acknowledgement}
This work has been supported in part by the Swedish Research Council and the Luxembourg National
Research Fund (FNR) Multi-Annual Thematic Research Programme (CORE) projects SEMIGOD.

%
%


\bibliographystyle{IEEEtran}
\bibliography{ref.bib}

\appendix
 
\section*{\small Proof of Lemma~\ref{rmk:decoding}}
\begin{proof}
The two UEs $j$ and $h$ are in the cluster $\u$ and served by cell $i$. Denote by $\gamma_{hj}$ and $\gamma_{hh}$ the SINR at user $j$
and $h$, in respect of the transmission for user $h$, in~\eqref{eq:hj} and \eqref{eq:hh}.
\begin{equation}
    \gamma_{hj}=\frac{p_{h\u}g_{ij}}{p_{j\u}g_{ij}+
        \sum_{k\in\I\backslash\{i\}}
  p_{k}g_{kj}\rho_k
+\sigma^2}
\label{eq:hj}
\end{equation}
\begin{equation}
    \gamma_{hh}=\frac{p_{h\u}g_{ih}}{p_{j\u}g_{ih}+
    \sum_{k\in\I\backslash\{i\}}
  p_{k}g_{kj}\rho_k+\sigma^2}
  \label{eq:hh}
\end{equation}

The condition for UE $j$ to decode UE $h$ is $\gamma_{hj}\geq\gamma_{hh}$, i.e.
\begin{multline}
    \gamma_{hj}\geq\gamma_{hh}\Leftrightarrow
p_{j\u}g_{ij}g_{ih}+g_{ij}\sum_{k\in\I\backslash\{i\}}
    p_{k}g_{kh}\rho_{k}+g_{ij}\sigma^2 \\
    \geq p_{j\u}g_{ij}g_{ih}+g_{ih}\sum_{k\in\I\backslash\{i\}}p_{k}g_{kj}\rho_k+g_{ih}\sigma^2 \\
\Leftrightarrow
\sum_{k\in\I\backslash\{i\}}
p_{k}\rho_{k}(g_{ih}g_{kj}-g_{ij}g_{kh})
\leq (g_{ij}-g_{ih})\sigma^2
\label{eq:decoding}
\end{multline}
Recall that $g_{ij}>g_{ih}$. Thus the right-hand side of~\eqref{eq:decoding} is positive. By Lemma~\ref{rmk:decoding} that $g_{ij}/g_{ih}\geq g_{kj}/g_{kh}$ for $k\in\I\backslash\{i\}$, the left-hand side is negative. Hence \eqref{eq:decoding} holds.
\end{proof}

\section*{\small Proof of Theorem~\ref{thm:sif}}
We reformulate the problem in~\eqref{eq:minload} below, for the sake of clarity of the proof.
\begin{subequations}
\begin{alignat}{2}
\quad &
\min\limits_{\rho_i,\vec{x}_i,\vec{r}\geq\vec{0}} \rho_i \\
s.t. \quad &     c_{j\u}=MB
    \log\left(1+\gamma_{j\u}(\bm{\rho}_{-i})
\right)
\label{eq:cju2} \\
& \rho_{i} = \sum_{\u\in\U_i}x_{\u} 
\label{eq:rhoi2} \\
&\sum_{u\in\U_j} r_{j\u} \geq d_j	
\label{eq:rju2}\\
& x_{\u}\geq\frac{r_{j\u}}{c_{j\u}(\bm{\rho}_{-i})}
\label{eq:xu2} 
\end{alignat}
\label{eq:reformulate}
\end{subequations}
\begin{proof}
(Monotonicity) Suppose $\bm{\rho'}\leq\bm{\rho}$. We change $c_{j\u}(\bm{\rho})$
to $c_{j\u}(\bm{\rho}')$. According to the monotonicity of the function
$c_{j\u}$, we have $c_{j\u}(\bm{\rho})\leq c_{j\u}(\bm{\rho}')$, for any
$u\in\U$ and $j\in\u$. Note that any feasible solution $(\bm{x},\bm{r})$ for the
minimization problem with $\bm{\rho}$ is still feasible for the minimization
problem with $\bm{\rho}'$. Therefore, the minimization problem in \eqref{eq:fi2}
is relaxed with $\bm{\rho}$ being replaced by $\bm{\rho}'$. Therefore, we have
$f_i(\bm{\rho}')\leq f_i(\bm{\rho})$.

(Scalability) First, note that the equality $\alpha f_i(\bm{\rho}) =
\min_{\vec{x},\vec{r}\geq\vec{0}}
\{\alpha\rho_i|~\eqref{eq:cju2}\textnormal{--}\eqref{eq:rju2},r_{j\u}\leq
x_{\u}c_{j\u}(\bm{\rho})\}$, for which we denote the optimal solution by $(\vec{x}',\vec{r}')$. Consider the problem $\beta$, i.e., 
$\beta:\min_{\vec{x},\vec{r}\geq\vec{0}}
\{\rho_i|~\eqref{eq:cju2}\textnormal{--}\eqref{eq:rju2},x_{\u}\geq \alpha
r_{j\u}/c_{j\u}(\bm{\rho})\}$ with $\alpha>1$. One can verify that $(\alpha\vec{x}',\vec{r}')$ is a feasible solution to the problem $\beta$, with objective value $\alpha f_i(\bm{\rho})$. Then, the optimum of problem $\beta$ is no more than $\alpha f_i(\bm{\rho})$. Suppose we replace $\bm{\rho}$ with
$\alpha\bm{\rho}$ ($\alpha>1$) in the minimization problem~\eqref{eq:fi2}. By the
scalability of $1/c_{j\u}(\bm{\rho})$, we have $1/c_{j\u}(\alpha\bm{\rho})<
\alpha/c_{j\u}(\bm{\rho})$. Thus, the minimization problem corresponding to $f_i(\alpha\bm{\rho})$ is a relaxation of the problem $\beta$. 
For the relaxed minimization problem, the optimal objective value $f_{i}(\alpha\bm{\rho})$ is less than that of $\beta$.
Therefore, we conclude $f_{i}(\alpha\bm{\rho})<\alpha f_i(\bm{\rho})$. Hence the conclusion. 
\end{proof}

\section*{\small Proof of Theorem~\ref{thm:feasibility}}
\begin{lemma}
For any $\bm{\rho}\geq\vec{0}$, if there exists $i\in\I$ such that
$\rho_i<f_i(\bm{\rho}_{-i})$, then $\bm{\rho}$ is not feasible to
\eqref{eq:sinr}\textnormal{--}\eqref{eq:dj}.
\label{lma:not_feasible}
\end{lemma}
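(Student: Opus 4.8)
The plan is to prove the contrapositive: I will show that if $\bm{\rho}$ \emph{is} feasible for the full system \eqref{eq:sinr}--\eqref{eq:dj}, then necessarily $\rho_i \geq f_i(\bm{\rho}_{-i})$ for every $i \in \I$. The lemma follows at once, since the existence of a single index $i$ with $\rho_i < f_i(\bm{\rho}_{-i})$ then contradicts feasibility.

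The key structural observation I would exploit is that, once the full load vector $\bm{\rho}$ is held fixed, the coupling between cells in \eqref{eq:sinr}--\eqref{eq:dj} disappears. The only place where one cell's variables enter another cell's constraints is the inter-cell interference term $I_{kj} = p_k g_{kj}\rho_k$ in the SINR \eqref{eq:sinr}, and this term depends on the other cells solely through their (now fixed) loads $\rho_k$. With $\bm{\rho}$ fixed, each $\gamma_{j\u}$, and hence each capacity $c_{j\u}$ in \eqref{eq:cju}, is a determined constant. Because the candidate clusters are partitioned across cells ($\U_i \cap \U_k = \phi$ for $i \neq k$) and every UE is served by a single cell (so $\U_j \subseteq \U_i$ for $j \in \J_i$), the remaining constraints \eqref{eq:rhoi}--\eqref{eq:dj} split into $n$ independent blocks, one per cell, with block $i$ involving only the cluster variables $\{x_{\u} : \u \in \U_i\}$.

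Given a feasible $(\bm{\rho}, \vec{x})$ for the full system, I would then restrict attention to cell $i$: the subvector $\vec{x}_i = (x_{\u})_{\u \in \U_i}$ together with $\rho_i = \sum_{\u \in \U_i} x_{\u}$ satisfies exactly the constraints \eqref{eq:sinr}--\eqref{eq:cju},\eqref{eq:dj} that define the single-cell problem \eqref{eq:minload} parameterized by $\bm{\rho}_{-i}$. Dropping the load cap $\rho_i \leq \bar{\rho}$ present in \eqref{eq:rhoi} only enlarges the single-cell feasible region, so it causes no difficulty. Hence $(\rho_i, \vec{x}_i)$ is a feasible point of that minimization, and since $f_i(\bm{\rho}_{-i})$ is by definition its optimal (minimum) value --- well-defined and finite by Lemma~\ref{rmk:feasible} and Lemma~\ref{rmk:domain} --- we conclude $\rho_i \geq f_i(\bm{\rho}_{-i})$. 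As $i$ is arbitrary, this holds for every cell, completing the contrapositive.

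The argument is short, and I do not expect a genuinely hard step; the one thing to get right is the \emph{decoupling claim} --- verifying carefully that every constraint in \eqref{eq:sinr}--\eqref{eq:dj} surviving after $\bm{\rho}$ is fixed references only the cluster variables of a single cell. This hinges on the two structural facts that clusters are partitioned by serving cell and that each UE's demand constraint \eqref{eq:dj} ranges only over $\U_j$, which lies entirely within its serving cell's cluster set. Once this separation is established, the reduction to the single-cell minimum and the optimality of $f_i$ make the conclusion automatic.
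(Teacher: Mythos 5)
Your proposal is correct and is essentially the paper's own argument in contrapositive form: both rest on the single observation that $f_i(\bm{\rho}_{-i})$ is by definition the minimum load of cell $i$ compatible with \eqref{eq:sinr}--\eqref{eq:cju},\eqref{eq:dj} once the other cells' loads are fixed, so a feasible $\bm{\rho}$ must satisfy $\rho_i\geq f_i(\bm{\rho}_{-i})$ for every $i$. Your explicit verification that the constraints decouple by cell once $\bm{\rho}$ is fixed is a detail the paper leaves implicit, but it does not change the route.
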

\begin{proof}
Let $\rho'_i = f_i(\bm{\rho}_{-i})$. By the definition of $f_i$, $\rho'_i$ is
the minimum value satisfying
\eqref{eq:sinr}\textnormal{--}\eqref{eq:cju}, and \eqref{eq:dj} under
$\bm{\rho}_{-i}$. Therefore any $\rho_i$ with $\rho_i<\rho'_i$ causes at least one of the
constraints~\eqref{eq:sinr}\textnormal{--}\eqref{eq:cju}, or \eqref{eq:dj} being
violated with $\bm{\rho}_{-i}$, meaning that the vector $\bm{\rho}$ cannot satisfy all constraints \eqref{eq:sinr}\textnormal{--}\eqref{eq:dj}. Hence the conclusion.
\end{proof}
\begin{proof}
Theorem~\ref{thm:feasibility} is proved as follows. By the inverse proposition of
Lemma~\ref{lma:not_feasible}, a feasible solution $\bm{\rho}$ always satisfies $\bm{\rho}\geq\vec{f}(\bm{\rho})$. Now suppose $\bm{\rho}$ is feasible to \textit{MinF} and consider using $\vec{f}(\bm{\rho})$ as a solution to \textit{MinF}. (Together with the $\vec{x}$ obtained when computing $\vec{f}(\bm{\rho})$.) 
Then $\vec{f}(\bm{\rho})$ satisfies~\eqref{eq:rhoi}. Also $\vec{f}(\bm{\rho})$ together with its $\vec{x}$ fulfills~\eqref{eq:sinr}\textnormal{--}\eqref{eq:cju} and \eqref{eq:dj} by the definition of $\vec{f}(\bm{\rho})$. Thus 
$\vec{f}(\bm{\rho})$ is feasible.

%
%
%

For the sufficiency, note that the feasibility of $\vec{f}(\bm{\rho})$ indicates that $\bm{\rho}_{-i}$ along with $\vec{x}_i$ obtained by solving $f_i(\bm{\rho}_{-i})$ satisfies \eqref{eq:sinr}\textnormal{--}\eqref{eq:cju}, and \eqref{eq:dj}.  Combined with the precondition $\rho_i\leq\bar{\rho}$ with $i\in\I$, the load $\bm{\rho}$ is feasible to \eqref{eq:sinr}\textnormal{--}\eqref{eq:dj} (and thus feasible to \textit{MinF}). Hence the conclusion.
\end{proof}

\section*{\small Proof of Theorem~\ref{thm:optimality}}
\begin{proof}

(Necessity) If $\bm{\rho}^{*}$ is feasible, then obviously we have $\bm{\rho}^{*}\leq\bar{\rho}\vec{1}$.
By
Theorem~\ref{thm:feasibility}, $\vec{f}(\bm{\rho}^{*})$ 
is also feasible and 
$\vec{f}(\bm{\rho}^{*})\leq\bm{\rho}^{*}$ holds. Also, $\vec{f}^{k}(\bm{\rho}^{*})$ for any $k\geq 1$ is a 
feasible solution. According to Theorem~\ref{thm:sif}, $\vec{f}(\bm{\rho})$ is monotonic in $\bm{\rho}$, and thus we have $\vec{f}^{k}(\bm{\rho}^{*})\geq\vec{f}^{k+1}(\bm{\rho}^{*})$ for any $k\geq 1$. Based on Lemma~\ref{rmk:lim}, we let $\bm{\rho}'=\lim_{k\rightarrow\infty}\vec{f}^{k}(\bm{\rho}^{*})$.
Then $\bm{\rho}'\leq \bm{\rho}^{*}$ holds, by the above discussion. In addition, note that $\bm{\rho}'$ is a feasible solution as well. By that $\bm{\rho}^{*}$ is optimal for \textit{MinF}, we have $\bm{\rho}'=\bm{\rho}^{*}$, otherwise $\bm{\rho}'$ would lead to a better objective value in \textit{MinF} than $\bm{\rho}^{*}$. Hence $\bm{\rho}^{*}=\lim_{k\rightarrow\infty}\vec{f}^{k}(\bm{\rho}^{*})$, i.e. $\bm{\rho}^{*}=\vec{f}(\bm{\rho}^{*})$.

(Sufficiency) By Theorem~\ref{thm:feasibility}, for any feasible $\bm{\rho}$, $\lim_{k\rightarrow\infty}\vec{f}^{k}(\bm{\rho})$ is feasible and $\lim_{k\rightarrow\infty}\vec{f}^{k}(\bm{\rho})\leq\bm{\rho}$ holds. By Lemma~\ref{rmk:lim}, the limit is unique for any $\bm{\rho}\geq\vec{0}$, and thus $\lim_{k\rightarrow\infty}\vec{f}^{k}(\bm{\rho})=\lim_{k\rightarrow\infty}\vec{f}^{k}(\bm{\rho}^{*})$. Since $\bm{\rho}^{*}=\vec{f}(\bm{\rho}^{*})$, we have $\bm{\rho}^{*}=\lim_{k\rightarrow\infty}\vec{f}^{k}(\bm{\rho}^{*})$. Thus $\bm{\rho}^{*}\leq\bm{\rho}$ for any feasible $\bm{\rho}$, meaning that $\bm{\rho}^{*}$ is optimal for \textit{MinF}.
Hence the conclusion.
\end{proof}

\end{document}